\documentclass[12pt,reqno,a4paper]{amsart}

\usepackage{lipsum,subeqnarray}
\usepackage{amsfonts}
\usepackage{graphicx}
\usepackage{amsmath}
\usepackage{mathtools}
\usepackage{amssymb}
\usepackage{enumerate}
\usepackage{slashed}
\usepackage{graphicx}
\usepackage{newlfont}
\usepackage{amsrefs}
\usepackage{comment}
\usepackage[normalem]{ulem}
\usepackage{mathtools}
\usepackage{accents}
\usepackage{leftidx}
\usepackage{bbm}
\usepackage{amsthm}
\usepackage{appendix}
\usepackage{comment}
\usepackage{subfig}
\usepackage{float}
\usepackage{subfloat}
\usepackage{url}

\baselineskip=7.0mm
\setlength{\baselineskip}{1.09\baselineskip}

\setlength{\textwidth}{6.6in} \setlength{\textheight}{8.6in}
\hoffset=-0.83truein
\voffset=-0.1truein

\numberwithin{equation}{section}

\theoremstyle{plain}
\newtheorem{theorem}{Theorem}[section]

\newtheorem{proposition}[theorem]{Proposition}
\newtheorem*{theorem*}{Conjecture}

\usepackage{xcolor}

\theoremstyle{definition}

\theoremstyle{remark}

\usepackage{stackengine}
\usepackage{subeqnarray} 
\usepackage{esint}

\begin{document}
\title[Critical String Theory in a $D=4$ Robertson-Walker Background]{Critical String Theory in a $D=4$ Robertson-Walker Background and the Large Scale Structure of Spacetime}
 
\author{Jaroslaw S. Jaracz}

\begin{abstract}
We show that $4$-dimensional Robertson-Walker spacetimes can be constructed for which all of the beta functions vanish to leading order, yielding consistent string theory without the need for any extra dimensions. We find that, up to scaling of some constants, there is a unique static solution, which we refer to as an anti-Einstein static universe. This is an infinite universe with negative scalar curvature. The associated stress energy tensor can be interpreted as a perfect fluid with a negative energy density. Interestingly, a fluid with a negative energy density was proposed in \cite{farnes2018unifying} as an ad hoc hypothesis to serve as a possible explanation for dark energy and dark matter. Here, such a fluid spontaneously appears by trying to fit string theory into only $4$ dimensions. Since our universe appears approximately static, we can look at perturbations away from the anti-Einstein static universe. This has to be done numerically and we only do it for a few choices of initial conditions (a full systematic numerical analysis of solutions will have to wait for the future). We find that these solutions are very sensitive to the initial conditions and yield a variety of behaviors. We hope that this behavior is rich enough to match cosmological observations by appropriately choosing the initial conditions (which needs to be systematically investigated with a search in the space of initial conditions). One of these solutions that we found is particularly interesting. It has a hubble parameter which is negative in the past (meaning a contracting universe), then a point at which the hubble parameter changes sign, so the universe starts expanding which could be identified with a big bang, after which the hubble parameter continues growing. Throughout all of this time, the acceleration is positive since the Hubble parameter is increasing. Finally, we comment that our result does not contradict \cite{antoniadis1988cosmological} as it seems that the authors overlooked the possibility of a purely complex axion field which allows for the Robertson-Walker metrics to make a contribution $c_{RW}\geq 4$ to the central charge. Thus, we can interpret dark matter and dark energy as being parts of a mechanism needed to keep $4$-dimensional string theory consistent. 
\end{abstract}
\maketitle

\section{Introduction}

Before starting the article proper, we need to address the (rightfully so) highly skeptical readers. As we will see the conclusions of the article, follow from some straightforward calculations, and then looking at the perturbations of the only static solution to the resulting equations. Then naturally, if it is so simple, how come no one has done it before? We believe the answer is two fold. First, the idea of using a $4$-dimensional Robertson-Walker spacetime for a background in string theory is very natural once one has learned about the beta functions. But the second instinct is that if this worked, since string theory has been around for so long someone would have already done it, which serves to filter out a large number of people interested in the idea. Second, the next instinct would be to check the literature, where one of the main references is \cite{antoniadis1988cosmological} where the authors plainly state that the contribution to the central charge made by the Robertson-Walker metrics is $c_{RW}<4$ meaning that $D=4$ will not work, and so the case is closed. This filters out the remaining people interested in the problem.

However, we were not aware of \cite{antoniadis1988cosmological}, so we simply tried to solve the beta functions for $D=4$, found this to be quite easy, and then were confused why we had not heard of this before. This brought us to \cite{antoniadis1988cosmological}, which claims that $c_{RW}<4$. So why is this not a contradiction of our result? The key is found in the paragraph prior to equation (12) where the authors obtain the forumla
\begin{align}
    K=\frac{1}{4}\beta^2 e^{-2\phi_0}
\end{align}
for the static case where $\beta$ is a constant related to the axion field. So here the authors concluded that $K\geq0$ in which case indeed $c_{RW}<4$. We had arrived at the same formula \eqref{CriticalEquation}, since we have $\phi=2\Phi$ comparing the notations of the two articles, but we instead concluded that $\beta=i|\beta|$ needs to be purely complex, and $K<0$. This allows us to make the contribution to the central charge be whatever we want. Why didn't the authors consider a complex axion? The article was written before the discoveries of the dark energy and dark matter problems. At the time, everyone "knew" all matter has nonnegative energy density, and thus the scalar curvature should be nonnegative. We suppose that by the time the dark energy and dark matter problems were discovered, the string community had moved onto compactifications, branes, and eventually Ads/CFT so no one thought to go back to $4$-dimensional string theory. 

 Finally, we mention if one accepts the calculations in \cite{antoniadis1988cosmological}, one can skip the majority of our paper and simply let the axion field be purely complex in that article. However, since we believe this is an important result, we include all of the calculations in painstaking detail. We believe the equations are right, as they also match the equations of \cite{antoniadis1988cosmological}, but if there is a mistake, it should be easy to spot the mistake. Also, our calculations are done in a different way, as we treat $\beta^\Phi=0$ as a conserved quantity dependent on initial conditions, rather than an equation of motion. In both cases, one arrives at $3$ equations, one of which is dependednt, but we think our way is better for the numerical analysis of solutions. Now that we have addressed this issue, we can move onto the article. 

String theory is one of the leading candidates for a theory of quantum gravity. It was originally proposed in the 1960s as a way of explaining the strong interaction. It was found that a massless spin-$2$ particle naturally arises on its own in the theory, and eventually it was realized that this particle could be identified with the graviton. Thus, string theory is a quantum theory where gravity pops out on its own, without being forced into the theory. See \cite{schwarz2012early} for a sketch of the early history. 

However, like all theories of quantum gravity, string theory has its problems. First, in the bosonic case, the dimension of the spacetime has to be $26$. By the addition of supersymmetry \cite{martin1998supersymmetry}, this can be lowered down to $10$ or $11$ depending on the type of string theory used. These extra dimensions have never been observed, so they are usually compactified, which leads to the study of Calabi-Yau manifolds, which then leads to the so-called landscape problem, where there are roughly $10^{500}$ different compactifications, and no way to choose which one is the correct one to describe our universe. See \cite{susskind2003anthropic} and \cite{Read2021-REATLA-2} for some perspectives. Moreover, supersymmetry has never been observed, even though evidence for it was expected to be found at the LHC, see for example \cite{khachatryan2014searches}. These are all valid criticisms. 

It is known that the dimension in which string theory is consistent can be changed by the introduction of dilaton and axion fields. If the dimension could be decreased to $4$, this would solve many problems. However since the universe is approximately homogeneous and isotropic, these fields should have very small spatial gradients, and the solution should be an approximately Robertson-Walker spacetime.

Next, we review some basics facts from string theory and cosmology which we will need. In bosonic string theory in a curved background the Polyakov action is given by
\begin{align} \label{PolyakovAction}
    \begin{split}
        S_P= &\frac{1}{4\pi \ell_S^2} \int d^2 \xi \left[ \sqrt{h} h^{\alpha \beta} g_{\mu \nu }(X) + \epsilon^{\alpha \beta} B_{\mu \nu}(X)     \right] \partial_\alpha X^\mu \partial_\beta X^\nu \\ &+ \frac{1}{4\pi} \int d^2 \xi \sqrt{h}R^{(2)} \Phi(X)
    \end{split}
\end{align}
where $h_{\alpha \beta}$ is the worldsheet metric (of signature $(++)$ after a Wick rotation), $g_{\mu \nu}$ is the background metric of "mostly plus" signature, $B_{\mu \nu}$ is an antisymmetric tensor, $\Phi$ is the dilaton field and $\ell_s$ is the string length parameter, related to the usual Regge slope $\alpha'$ by
\begin{align}
    \ell_s=\sqrt{\alpha'}.
\end{align}
The beta functions for this action are well known and are
\begin{align}
    \label{BetaFunctiong}
    \frac{\beta_{\mu \nu}^g}{\ell_s^2} = R_{\mu \nu} + 2\nabla_\mu \nabla_\nu \Phi -\frac{1}{4}H_{\mu \rho \sigma}{H_{\nu}}^{\rho \sigma} + O(\ell_s^2)
\end{align}
\begin{align}
    \label{BetaFunctionB} 
    \frac{\beta_{\mu \nu}^B}{\ell_s^2} = -\frac{1}{2}\nabla^\rho \left( e^{-2\Phi} H_{\mu \nu \rho}   \right) + O(\ell_s^2)
\end{align}
\begin{align}
    \label{BetaFunctionPhi} 
    \beta^\Phi=D-26 +\frac{3}{2}\ell_s^2 \left[ 4\nabla^\mu \Phi \nabla_\mu \Phi -4 \square \Phi -R + \frac{1}{12}H^2\right] + O(\ell_s^4)
\end{align}
where
\begin{align}
    H_{\mu \nu \rho} = \partial_\mu B_{\nu \rho} + \partial_\nu B_{\rho \mu} + \partial_\rho B_{\mu \nu}
\end{align}
and $\square =\nabla^\mu \nabla_\mu$ is the d'Alembertian. To obtain a consistent theory, the beta functions have to vanish, at least to leading order. See Section 6.1 in \cite{Kiritsis:2019npv} or Section 3.7 in \cite{Polchinski:1998rq}. 

One important thing to note is that the metric in the above formulas is the metric in the string frame. It is related to the metric in the Einstein frame $g_{\mu \nu}^E$ by 
\begin{align}
    g^E_{\mu \nu} = e^{-\frac{4\Phi}{D-2}} g_{\mu \nu}
\end{align}
so in the case where $D=4$ this gives,
\begin{align}\label{RelationEinsteinToStringMetric}
    g^E_{\mu \nu} = e^{-2\Phi} g_{\mu \nu}.
\end{align}
The metric in the Einstein frame is considered the "physical metric." This is because the vanishing of the beta functions can also be obtained from the variation of an action which in the Einstein frame takes the form
\begin{align}
    S_{E}=\frac{1}{2\kappa^2}\int d^Dx \sqrt{g^E}\left[ R - \frac{4}{D-2}\left( \nabla \Phi  \right)^2 - \frac{e^{-8\Phi/(D-2)}}{12}H^2 + 2 \frac{26-D}{3\ell_s^2} e^{4\Phi/(D-2)} 
    \right] + O(\ell_s^2)
\end{align}
and looks like the ordinary Einstein-Hilbert action where $\Phi$ and $H$ play the role of matter fields. Thus comparing this with the usual Einstein-Hilbert action suggests that we view $g^E$ as the usually observed, that is physical, metric.

Next, we need to review some facts from general relativity. The Robertson-Walker spacetimes are the homogenous, isotropic solutions to Einstein's equations
\begin{align} \label{EinsteinEquations}
    G_{\mu \nu} = 8\pi T_{\mu \nu}.
\end{align}
Assuming simple connectedness and dimension $4$ for the spacetime, the manifold $M$ and the metric $g$ must be of the form
\begin{align} \label{RWManifoldMetric} \begin{split}
    M=& \mathbb{R}\times \Sigma \\
    g=& -dt^2 + a^2(t) d\Sigma^2 \end{split}
\end{align}
where $\Sigma$ is either hyperbolic space, Euclidean space or the round sphere all in dimension $3$ and $d\Sigma^2$ is the associated metric. 

A perfect fluid stress-tensor is given by 
\begin{align} \label{GeneralPerfectFluidStressTensor}
    T_{\mu \nu} = \rho u_{\mu} u_{\nu} + P (g_{\mu \nu} + u_\mu u_\nu)
\end{align}
where $\rho$ is the fluid energy density, $P$ is the pressure, and $u^\mu$ is the $4$-velocity of the fluid. Assuming $T_{\mu \nu}$ is of the form \eqref{GeneralPerfectFluidStressTensor} with $u^\mu=\partial_t$ is the $4$-velocity of the isotropic observers, then the evolution equations for the scaling factor $a(t)$ are given by
\begin{align}
  \label{RWDDotEquation}  3\ddot{a} / a = -4\pi(\rho +3P) \\
    \label{RWDotEquation}3\dot{a}^2/a^2 = 8\pi \rho -3K/a^2
\end{align}
where dots denote derivatives with respect to $t$ and $K=-1, 0, 1$ for the cases of hyperbolic space, Euclidean space, and the round sphere respectively.

The vacuum Einstein equations with cosmological constant 
\begin{align}
    G_{\mu \nu} + \Lambda g_{\mu \nu} = 0
\end{align}
for can be put into a perfect fluid like form by writing
\begin{align} \begin{split}
    G_{\mu \nu} =& -\Lambda g_{\mu \nu} \\=& -\Lambda g_{\mu \nu} + \Lambda u_\mu u_ \nu - \Lambda u_\mu u_\nu\\ =& \Lambda u_\mu u_\nu - \Lambda (g_{\mu \nu} + u_\mu u_\nu) \end{split}
\end{align}
so comparing this with \eqref{EinsteinEquations} and \eqref{GeneralPerfectFluidStressTensor} shows the cosmological constant term can be written as a perfect fluid with
\begin{align} \label{DensityPressureCosmologicalConstant}
    \rho=\Lambda/8\pi, \quad P=-\Lambda/8\pi
\end{align}
Then using \eqref{RWDDotEquation} we get
\begin{align} \label{AccelerationCosmologicalConstant}
    3\ddot{a}/a=-4\pi(\rho + 3P)=-4\pi\left( \frac{\Lambda}{8\pi} - \frac{3\Lambda}{8\pi} \right)= \Lambda
\end{align}
so if the cosmological constant is positive $\Lambda>0$ then $\ddot{a}>0$ indicating an accelerating expansion. For a reference of these basic facts, see chapters 4 and 5 of \cite{Wald:1984rg}.

\section{Vanishing of the Beta Functions}\label{SecVanishingBetaFunctions}

First we compute certain important quantities for $g^E$ and $g$ which we will subsequently need. Then, we solve $\beta^B_{\mu \nu}=0$ to leading order since that beta function is the easiest and we need it determine $H_{\mu \nu \rho}$ for later use. We then see what equations have to be satisfied to make \eqref{BetaFunctiong} vanish. Finally, we use the well known fact that $\beta^g_{\mu \nu}= \beta^B_{\mu \nu}=0$ implies $\beta^\Phi=constant$ (see for example the paragraph after equation (6.1.8) in \cite{Kiritsis:2019npv}) to turn the equation $\beta^\Phi = 0$ into a condition on the initial values of the relevant fields which acts as a constant of motion,

\subsection{Some Basic Calculations}

As mentioned, the metric in the Einstein frame is to be thought of as the physical metric. Due to homogeneity and isotropy we require $g^E$ is a $4$-dimensional Robertson-Walker metric, that is
\begin{align}  \begin{split}
    M=& \mathbb{R}\times \Sigma \\
    g^E=& -dt^2 + a^2(t) d\Sigma^2. \end{split}
\end{align}
All objects related to this metric will be denoted with $E$, for example $\nabla^E, R^E_{\mu \nu}, \square^E,$ etc. First we collect some calculations related to this metric. We let Greek letters $\alpha, \beta, \mu, \nu,$ etc. denote spacetime indices and Latin letters $i, j, k,$ etc. denote space indices.   

Let $\overline{g}_{ij}$ denote the metric on the $3$-dimensional space slices. For all functions we let $\dot{F}=dF/dt$. We compute the Christoffel symbols. We have
\begin{align}
    g^E_{\mu \nu} =\begin{pmatrix}
     -1 & 0 & 0  & 0 \\
     0 \\ 
    0 & & a^2 \overline{g}_{ij}\\ 
    0  &
    \end{pmatrix}, \quad \quad  (g^E)^{\mu \nu} =\begin{pmatrix}
     -1 & 0 & 0  & 0 \\
     0 \\ 
    0 & & a^{-2} \overline{g}^{ij}\\ 
    0  &
    \end{pmatrix}
\end{align}
so that it is easy to compute
\begin{align} \begin{split}
    \left( ^E\Gamma  \right)^0_{00}=0, \quad \left( ^E\Gamma  \right)^0_{i0}=0, \quad \left( ^E\Gamma  \right)^0_{ij}=a\dot{a}\overline{g}_{ij} \\
    \left( ^E\Gamma  \right)^k_{00}=0, \quad \left( ^E\Gamma  \right)^k_{i0}=\frac{\dot{a}}{a}\delta_i^k, \quad \left( ^E\Gamma  \right)^k_{ij}=\overline{\Gamma}^k_{ij} \end{split}
\end{align}
where $\overline{\Gamma}^{k}_{ij}$ are the Christoffel symbols of $\overline{g}_{ij}$. From this we can directly compute the components of the Ricci tensor to obtain
\begin{align} \label{RicciComponentsRWMetric}\begin{split}
    R^E_{00}&=-3\frac{\ddot{a}}{a} \\
    R^E_{ij}&=\left( a\ddot{a} + 2\dot{a}^2 +2K  \right)\overline{g}_{ij} \\
    R^E_{0 i}&=0
\end{split}\end{align}
where the fact that for a $n$-dimensional spaceform $\overline{R}_{ij}=K(n-1)\overline{g}_{ij}$ with $K=-1$ for hyperbolic space, $K=0$ for Euclidean space, and $K=1$ for the sphere was used. This also gives us the value of the scalar curvature 
\begin{align}
    R^E=6\left( \frac{\ddot{a}}{a} + \frac{\dot{a}^2}{a^2} + \frac{K}{a^2}  \right).
\end{align}
As a check, these formulas lead to the same main equations (5.2.14) and (5.2.15) in \cite{Wald:1984rg} (the equations are given for $K=-1, 0, 1$ but the calculations, where $R^E_{\mu \nu}$ is an intermediate step are only done for the $K=0$ case).  

Next, by homogeneity and isotropy we only need to consider functions $F=F(t)$. We need to look at $\nabla^E_\mu \nabla^E_\nu F$. Recall, if given some one form $\omega_\mu$ and some metric $g$ with covariant derivative $\nabla$ one has
\begin{align}
    \nabla_\mu \omega_\nu = \partial_\mu \omega_\nu - \Gamma_{\mu \nu}^\rho \omega_\rho
\end{align}
but now if $F=F(t)$ then $\nabla^E_\mu F=F_\mu$ is a $1$-form with components
\begin{align}
    F_0=\dot{F}, \quad F_i = 0.
\end{align}
Therefore we get
\begin{align}
    \begin{split}
        \nabla^E_0 \nabla^E_0 F &= \partial_0 F_0 - \left(^E\Gamma\right)^0_{00} F_0 =\ddot{F} \\
        \nabla^E_i \nabla^E_0 F &= \partial_i F_0 - \left(^E\Gamma\right)^0_{i0} F_0 =0 \\
        \nabla^E_i \nabla^E_j F &= \partial_i F_j - \left(^E\Gamma\right)^0_{ij} F_0 = -a\dot{a} \dot{F} \overline{g}_{ij}
    \end{split}
\end{align}
for the components. Next we also get
\begin{align*} \begin{split}
    \square^E F = (g^{E})^{\mu \nu} \nabla^E_\mu \nabla^E_\nu F &= (g^E)^{00} \nabla^E_0 \nabla^E_0 F + (g^E)^{ij} \nabla^E_i \nabla^E_j F \\
    &=-\ddot{F} + a^{-2} \overline{g}^{ij}\left( -a\dot{a} \dot{F} \overline{g}_{ij} \right) \\
    &= -\ddot{F} -3 \frac{\dot{a}}{a}\dot{F}  \end{split}
\end{align*}
so to summarize
\begin{align} \label{DAlembertianRWMetric}
    \square^E F = -\ddot{F} - 3 \frac{\dot{a}}{a}\dot{F}
\end{align}
for the D'Almbertian.

Similarly, we want to compute all of the same quantities for the string metric 
\begin{align}
    g = e^{2\Phi} g^E = -e^{2\Phi} dt^2 + e^{2\Phi} a^2 d\Sigma^2
\end{align}
by \eqref{RelationEinsteinToStringMetric}. Due to homogeneity and isotropy we require $\Phi=\Phi(t)$.  Recall the general formula for Christoffel symbols of $g$ is
\begin{align}
    \Gamma^\gamma_{\alpha \beta}=\frac{1}{2}g^{\gamma \rho}\left( \partial_\alpha g_{\beta \rho} + \partial_\beta g_{\alpha \rho} - \partial_\rho g_{\alpha \beta}   \right).
\end{align} 
To make computations easier we write
\begin{align}
    g_{\mu \nu} =\begin{pmatrix}
     -e^{2\Phi} & 0 & 0  & 0 \\
     0 \\ 
    0 & & e^{2\Phi}a^2 \overline{g}_{ij}\\ 
    0  &
    \end{pmatrix}, \quad \quad  g^{\mu \nu} =\begin{pmatrix}
     -e^{-2\Phi} & 0 & 0  & 0 \\
     0 \\ 
    0 & & e^{-2\Phi} a^{-2} \overline{g}^{ij}\\ 
    0  &
    \end{pmatrix}
\end{align}

so we compute
\begin{align} \begin{split}
    \Gamma^0_{00}&=\frac{1}{2}g^{0 \rho}\left( \partial_0 g_{0 \rho} + \partial_0 g_{0 \rho} - \partial_\rho g_{00}   \right) = \frac{1}{2}g^{0 0}\left( \partial_0 g_{0 0} + \partial_0 g_{0 0} - \partial_0 g_{00}   \right)\\ &= \frac{1}{2}(-e^{-2\Phi})\left(-\partial_0 \left( e^{2\Phi}\right) \right) = \dot{\Phi} \\
    \Gamma^0_{i0}&=\frac{1}{2}g^{0 \rho}\left( \partial_i g_{0 \rho} + \partial_0 g_{i \rho} - \partial_\rho g_{i 0}   \right)=\frac{1}{2}g^{0 0}\left( \partial_i g_{0 0} + \partial_0 g_{i 0} - \partial_0 g_{i 0}   \right)= 0 \\
    \Gamma^0_{ij}&=\frac{1}{2}g^{0 \rho}\left( \partial_i g_{j \rho} + \partial_j g_{i \rho} - \partial_\rho g_{ij}   \right)= \frac{1}{2}g^{0 0}\left( \partial_i g_{j 0} + \partial_j g_{i 0} - \partial_0 g_{ij}   \right) \\
    &= \frac{1}{2}(-e^{-2\Phi})\left(-\partial_0 \left(  e^{2\Phi}a^2 \overline{g}_{ij} 
  \right) \right) = \frac{1}{2}e^{-2\Phi}\left(2a\dot{a} e^{2\Phi} + 2\dot{\Phi}e^{2\Phi} a^2  \right)\overline{g}_{ij}  \\
  &=\left(  a\dot{a} + a^2 \dot{\Phi}  \right)\overline{g}_{ij} \\
  \Gamma^k_{00} & = \frac{1}{2}g^{k \rho}\left( \partial_0 g_{0 \rho} + \partial_0 g_{0 \rho} - \partial_\rho g_{00}   \right) = \frac{1}{2}g^{k m}\left( \partial_0 g_{0 m} + \partial_0 g_{0 m} - \partial_m g_{00}   \right)=0 \\
  \Gamma^{k}_{i0}&= \frac{1}{2}g^{k \rho}\left( \partial_i g_{0 \rho} + \partial_0 g_{i \rho} - \partial_\rho g_{i 0}   \right)= \frac{1}{2}g^{k m}\left( \partial_i g_{0 m} + \partial_0 g_{i m} - \partial_m g_{i 0}   \right) \\
  &= \frac{1}{2}(e^{-2\Phi} a^{-2} \overline{g}^{km}) \partial_0 \left( e^{2\Phi} a^2 \overline{g}_{im} \right) = \frac{1}{2}e^{-2\Phi}a^{-2}\left( 2a\dot{a} e^{2\Phi} + 2\dot{\Phi}e^{2\Phi} a^2  \right) \overline{g}^{km}\overline{g}_{im} \\
  & = \left(  \frac{\dot{a}}{a} + \dot{\Phi}  \right) \delta^k_i \\
  \Gamma^{k}_{ij}&= \frac{1}{2}g^{k \rho}\left( \partial_i g_{j \rho} + \partial_j g_{i \rho} - \partial_\rho g_{ij}   \right)= \frac{1}{2}g^{k m}\left( \partial_i g_{j m} + \partial_j g_{i m} - \partial_m g_{ij}   \right) \\
  &= \frac{1}{2}\overline{g}^{k m}\left( \partial_i \overline{g}_{j m} + \partial_j \overline{g}_{i m} - \partial_m \overline{g}_{ij}   \right) = \overline{\Gamma}^k_{ij}.
  \end{split}
\end{align}

Next as before we have to compute $\nabla_\mu \nabla_\nu F$ for $F=F(t)$. In that case we have $\nabla_\mu F = F_\mu$ with $F_0 =\dot{F}$ and $F_i=0$. Therefore
\begin{align} \label{StringMetricHessian} \begin{split}
    \nabla_0 \nabla_0 F &= \partial_0 F_0 - \Gamma_{00}^0 F_0 = \ddot{F} - \dot{\Phi}\dot{F} \\
    \nabla_i \nabla_0 F &= \partial_i F_0 - \Gamma^0_{i0} F_0 = 0 \\
    \nabla_i \nabla_j F &= \partial_i F_j - \Gamma_{ij}^0 F_0 = -\dot{F}\left(  a\dot{a} + a^2 \dot{\Phi}  \right)\overline{g}_{ij} \end{split}
\end{align}
which also gives
\begin{align}
    \begin{split}
        \square F &= g^{\mu \nu} \nabla_\mu \nabla_\nu F = g^{00}\nabla_0 \nabla_0 F + g^{ij}\nabla_i \nabla_j F \\
        &=(-e^{-2\Phi})\left( \ddot{F} - \dot{\Phi} \dot{F} \right) + (e^{-2\Phi}a^{-2}\overline{g}^{ij}) \left(  -\dot{F}\left(  a\dot{a} + a^2 \dot{\Phi}  \right)\overline{g}_{ij} \right) \\
    &= e^{-2\Phi}\left( -\ddot{F} + \dot{\Phi} \dot{F} - \dot{F}\left( a\dot{a} + a^2 \dot{\Phi}\right)\overline{g}_{ij} a^{-2} \overline{g}_{ij}  \right)\\
    &= e^{-2\Phi} \left( -\ddot{F} + \dot{\Phi}\dot{F}-3\dot{F} \left( \frac{\dot{a}}{a} + \dot{\Phi}  \right)  \right)    \\
        & =e^{-2\Phi}\left( -\ddot{F} - 3\frac{\dot{a}}{a}\dot{F} - 2\dot{\Phi} \dot{F}  \right)
    \end{split}
\end{align}
for the D'Alembertian. To summarize
\begin{align}\label{DAlembertianStringMetric}
     \square F = e^{-2\Phi}\left( -\ddot{F} - 3\frac{\dot{a}}{a}\dot{F} - 2\dot{\Phi} \dot{F}  \right)
\end{align}
which can also be written as
\begin{align}\label{RelationBetweenDAlembertians}
     \square F = e^{-2\Phi}\left( \square^E F - 2\dot{\Phi} \dot{F}  \right)
\end{align}
by \eqref{DAlembertianRWMetric} for $F=F(t)$. 

Next, we need to compute the components of $R_{\alpha \beta}$ and $R$. We could do this directly from the Christoffel symbols. However, since we have \eqref{RicciComponentsRWMetric} and the two metrics are related by a conformal change, we can use the well known formula for the Ricci curvature and scalar curvature under a confromal change. The formulas say that if 
\begin{align}
    \widetilde{g}_{\mu \nu} = \Omega^2 \widehat{g}_{\mu \nu}
\end{align}
then 
\begin{align} \begin{split}
    \widetilde{R}_{\mu \nu} &= \widehat{R}_{\mu \nu}-(D-2)\widehat{\nabla}_\mu \widehat{\nabla}_\nu \ln\Omega - \widehat{g}_{\mu \nu} \widehat{g}^{\alpha \beta} \widehat{\nabla}_\alpha \widehat{\nabla}_\beta \ln \Omega + (D-2) (\widehat{\nabla}_\mu \ln \Omega)(\widehat{\nabla}_\nu \ln \Omega)\\
    & \quad - (D-2) \widehat{g}_{\mu \nu} \widehat{g}^{\alpha \beta} (\widehat{\nabla}_\alpha \ln \Omega)(\widehat{\nabla}_\beta \ln \Omega )
  \end{split}
\end{align}
and
\begin{align}
    \widetilde{R} = \Omega^{-2} \left( \widehat{R} -2 (D-1) \widehat{g}^{\alpha \beta} \widehat{\nabla}_\alpha \widehat{\nabla}_\beta \ln \Omega -(D-2)(D-1)\widehat{g}^{\alpha\beta}(\widehat{\nabla}_\alpha \ln \Omega)(\widehat{\nabla}_\beta \ln \Omega)  \right)
\end{align}
and so applying this to $D=4$ with $g_{\mu \nu}=e^{2\Phi} g^{E}_{\mu \nu}$ we have $\Omega=e^\Phi$ so $\ln(\Omega)=\Phi$ which gives us
\begin{align}
    R_{\mu \nu} &=R^E_{\mu \nu} -2 \nabla^E_\mu \nabla^E_\nu \Phi -g^E_{\mu \nu} (\square^E \Phi) + 2 \nabla^E_\mu \Phi \nabla^E_\nu \Phi - 2 g^E_{\mu \nu}  ((g^E)^{\alpha \beta} \nabla^E_\alpha \Phi \nabla^E_\beta \Phi) \\
    R & = e^{-2\Phi} \left( R^E - 6 (\square^E \Phi) - 6 ((g^E)^{\alpha \beta} \nabla^E_\alpha \Phi \nabla^E_\beta \Phi) \right)
\end{align}
In the case $\Phi=\Phi(t)$ we have
\begin{align}
    (g^E)^{\alpha \beta} \nabla^E_\alpha \Phi \nabla^E_\beta \Phi = -\dot{\Phi}^2
\end{align}
so we have
\begin{align} \label{RicciStringMetricR00} \begin{split}
    R_{00} &=R^E_{00} -2 \nabla^E_0 \nabla^E_0 \Phi -g^E_{00} (\square^E \Phi) + 2 \nabla^E_0 \Phi \nabla^E_0 \Phi - 2 g^E_{00}  (-\dot{\Phi}^2) \\
    &= -3\frac{\ddot{a}}{a}  -2 \ddot{\Phi} -(-1)(\square^E \Phi) + 2 \dot{\Phi}^2 -2 (-1)(-\dot{\Phi}^2) \\
    &= -3\frac{\ddot{a}}{a}  -2 \ddot{\Phi} +(\square^E \Phi) \end{split}
\end{align}

\begin{align}\label{RicciStringMetricRii} \begin{split}
   R_{ij}&= R^E_{ij} -2 \nabla^E_i \nabla^E_j \Phi -g^E_{ij} (\square^E \Phi) + 2 \nabla^E_i \Phi \nabla^E_j \Phi - 2 g^E_{ij}  (-\dot{\Phi}^2) \\
   &=\left( a\ddot{a} + 2\dot{a}^2 +2K  \right)\overline{g}_{ij} - 2\left( -a\dot{a} \dot{\Phi} \overline{g}_{ij} \right) -(a^2 \overline{g}_{ij})(\square^E \Phi)+0  + 2\dot{\Phi}^2 a^2 \overline{g}_{ij} \\
   &=\left( a\ddot{a} + 2\dot{a}^2 +2K +2a\dot{a} \dot{\Phi} -a^2 (\square^E \Phi )  +2 \dot{\Phi}^2 a^2  \right) \overline{g}_{ij} \end{split}
\end{align}

\begin{align}
    \begin{split}
     R_{0i} =R^E_{0i} -2 \nabla^E_0 \nabla^E_i \Phi -g^E_{0 i} (\square^E \Phi) + 2 \nabla^E_0 \Phi \nabla^E_i \Phi - 2 g^E_{0 i}  (-\dot{\Phi}^2) = 0
    \end{split}
\end{align}

\begin{align} \label{StringMetricScalarCurvature} \begin{split}
   R & = e^{-2\Phi} \left( R^E - 6 (\square^E \Phi) - 6 (-\dot{\Phi}^2) \right) \\
   &=e^{-2\Phi} \left[ 6\left( \frac{\ddot{a}}{a} + \frac{\dot{a}^2}{a^2} + \frac{K}{a^2}  \right) - 6(\square^E \Phi) + 6 \dot{\Phi^2}  \right] \end{split}
\end{align}

\subsection{Solving $\beta^B_{\mu \nu}=0$}

To obtain that $\beta^B_{\mu \nu}=0$ to leading order we need to solve the equation
\begin{align}
    \nabla^\rho \left( e^{-2\Phi} H_{\mu \nu \rho} \right)= 0
\end{align}
We can solve this by making the duality transformation 
\begin{align}\label{FormulaH}
    H_{\mu \nu \rho} = e^{2\Phi} \epsilon_{\lambda \mu \nu \rho} \nabla^\lambda b
\end{align}
where $\epsilon_{\lambda \mu \nu \rho}$ is a totally antisymmetric tensor compatible with the metric, meaning it satisfies $\nabla_\alpha \epsilon_{\beta \mu \nu \rho}=0 $, and $b$ is the axion field. To see that this solves the equation we compute
\begin{align}
    \nabla_\alpha \left( e^{-2\Phi} H_{\mu \nu \rho} \right) = \nabla_\alpha \left( \epsilon_{\lambda \mu \nu \rho} \nabla^\lambda b \right) = \left(\nabla_\alpha  \epsilon_{\lambda \mu \nu \rho}\right) \nabla^\lambda b + \epsilon_{\lambda \mu \nu \rho} \nabla_\alpha \nabla^\lambda b = \epsilon_{\lambda \mu \nu \rho} \nabla_\alpha \nabla^\lambda b
\end{align}
which therefore gives
\begin{align}
     \nabla^\rho \left( e^{-2\Phi} H_{\mu \nu \rho} \right) = \epsilon_{\lambda \mu \nu \rho} \nabla^\rho \nabla^\lambda b = \epsilon_{\lambda \mu \nu \rho} \nabla^\lambda \nabla^\rho b = -\epsilon_{\rho \mu \nu \lambda} \nabla^\lambda \nabla^\rho b =-\epsilon_{\lambda \mu \nu \rho} \nabla^\rho \nabla^\lambda b =0
\end{align}
where we used the symmetry $\nabla^\lambda \nabla^\rho b = \nabla^\rho \nabla^\lambda b$, the antisummetry of $\epsilon_{\lambda \mu \nu \rho}$, and the fact we can relabel contracted dummy indices. So this does indeed give a solution. However, the axion field $b$ is not free because $H_{\mu \nu \rho}$ has to satisfy the Jacobi identity
\begin{align}
    \nabla_{\left[ \alpha \right.} H_{\left. \mu \nu \rho   \right]} = 0.
\end{align}
We can compute 
\begin{align}
    \nabla_\alpha H_{\mu \nu \rho} = \nabla_\alpha \left( e^{2\Phi} \epsilon_{\lambda \mu \nu \rho} \nabla^\lambda b  \right) = \epsilon_{\lambda \mu \nu \rho}e^{2\Phi}( \nabla_\alpha \nabla^\lambda b  + 2\nabla_\alpha \Phi\nabla^\lambda b ).
\end{align}
Since $\nabla_{\left[ \alpha \right.} H_{\left. \mu \nu \rho   \right]}$ is antisymmetric, we only have to compute a single component for a distinct collection of indices. A direct computation then gives
\begin{align}
    \nabla_{\left[ 0 \right.} H_{\left. 123   \right]} = \frac{1}{4!} \sum_{\pi} \text{sgn}(\pi) \nabla_{\pi(0)} H_{ \pi(1) \pi(2) \pi(3)  } = \frac{1}{4}\epsilon_{0123}e^{2\Phi}\left( \square b + 2\nabla_\alpha \Phi\nabla^\alpha b  \right)
\end{align}
so that $b$ must satisfy 
\begin{align}
    \square b + 2\nabla_\alpha \Phi\nabla^\alpha b = 0.
\end{align}
The reason one gets this positive coefficient is because $\text{sgn}(\pi)$ and the antisymmetric tensor always give the same sign when permuting the indices. This equation matches the equation for the axion obtained in \cite{antoniadis1988cosmological} so this is a good sign. 

Now, by homogeneity and isotropy we require that $b=b(t)$ and $\Phi=\Phi(t)$. Therefore the above equation becomes 
\begin{align}
    e^{-2\Phi}\left( -\ddot{b} - 3\frac{\dot{a}}{a}\dot{b} - 2\dot{\Phi} \dot{b}  \right) - 2e^{-2\Phi}\dot{\Phi}\dot{b} = 0
\end{align}
or 
\begin{align}
    \ddot{b}+ \left(  3\frac{\dot{a}}{a} +4\dot{\Phi} \right)\dot{b} = 0
\end{align}
which is easily solved using the method of integrating factors to obtain
\begin{align}\label{Formulabdot}
    \dot{b}=\frac{\beta}{a^3 e^{4\Phi}}
\end{align}
where $\beta$ is some constant (potentially complex as we will see later).

Also by homogeneity and isotropy, all of our equations end up depending only on $t$ and can be done at a conveniently chosen point $(t, p)$ with $p\in \Sigma$. We can pick $(t, p)$ to be a point such that $\overline{g}_{ij}=\delta_{ij}$ so that
\begin{align} \label{DiagonalMetrics}
    \begin{split}
        g^E_{\mu \nu}&=\text{diag}(-1, a^2, a^2, a^2), \quad (g^E)^{\mu \nu}= \text{diag}(-1, a^{-2}, a^{-2}, a^{-2}) \\
        g_{\mu \nu} &= \text{diag}(-e^{2\Phi}, e^{2\Phi}a^2, e^{2\Phi}a^2, e^{2\Phi}a^2 ), \;  g^{\mu \nu} = \text{diag}(-e^{-2\Phi}, e^{-2\Phi}a^{-2}, e^{-2\Phi}a^{-2}, e^{-2\Phi}a^{-2} )
    \end{split}
\end{align}
which can be done by either choosing geodesic normal coodrinates in the space slice around $p$ or by noting the Robertson-Walker metrics can always be put in diagonal form in a (local or global, depending on $K$) coordinate patch, see (5.1.11) in \cite{Wald:1984rg}.

Now, since $\epsilon_{\lambda \mu \nu \rho}$ is totally antisymmetric and compatible with the metric, it is (up to a constant multiple) the volume form, so 
\begin{align}
    \epsilon_{\lambda \mu \nu \rho} = \sqrt{-g}\tilde{\epsilon}_{\lambda \mu \nu \rho}
\end{align}
where $\tilde{\epsilon}_{\lambda \mu \nu \rho}$ is the Levi-Civita symbol. But at our chosen point $\sqrt{-g}=e^{4\Phi}a^3$. Moreover, the only non-zero component of $\nabla^\lambda b$ is $\nabla^0 b = -e^{-2\Phi} \dot{b}$. But then we get
\begin{align}
    H_{\mu \nu \rho} = e^{2\Phi} \epsilon_{\lambda \mu \nu \rho} \nabla^\lambda b = e^{2\Phi}(e^{4\Phi}a^3 \tilde{\epsilon}_{0 \mu \nu \rho } )(-e^{-2\Phi})\left(  \frac{\beta}{a^3 e^{4\Phi}}  \right) = -\beta \tilde{\epsilon}_{0 \mu \nu \rho }
\end{align}
for the components. 

Next, we need to compute
\begin{align}
    H_{\mu \rho \sigma}{H_{\nu}}^{\rho \sigma}, \quad H^2=H_{\mu \nu \rho}H^{\mu \nu \rho}
\end{align}
since we will need these appear in $\beta^g_{\mu \nu}$ and $\beta^\Phi$. Since the metric is diagonal
\begin{align}
    H_{\mu \rho \sigma}{H_{\nu}}^{\rho \sigma} = H_{\mu \rho \sigma} H_{\nu \alpha \beta} g^{\alpha \rho} g^{\beta \sigma} = \sum_{\rho, \sigma} H_{\mu \rho \sigma} H_{\nu \rho \sigma} g^{\rho \rho} g^{\sigma \sigma}
\end{align}
but now if either $\mu$ or $\nu$ is $0$ then there is a repeated index in $\tilde{\epsilon}_{0 \mu \nu \rho}$ and so the term vanishes. So we have
\begin{align}\label{H00}
     H_{0 \rho \sigma}{H_{\nu}}^{\rho \sigma} =  H_{\mu \rho \sigma}{H_{0}}^{\rho \sigma}=0.
\end{align}
Next, let us look at the case of $H_{i \rho \sigma}{H_{j}}^{\rho \sigma}$ for $i\neq j$. We have
\begin{align}
    H_{i \rho \sigma}{H_{j}}^{\rho \sigma}  = \sum_{\rho, \sigma} H_{i \rho \sigma} H_{j \rho \sigma} g^{\rho \rho} g^{\sigma \sigma} = \sum_{\rho, \sigma} \beta^2 \tilde{\epsilon}_{0i \rho \sigma} \tilde{\epsilon}_{0j \rho \sigma} g^{\rho \rho} g^{\sigma \sigma}
\end{align}
but for $\tilde{\epsilon}_{0i \rho \sigma}$ to be non-zero, all of the indices have to be distinct, meaning either $\rho$ or $\sigma$ equals $j$, But then necessarily $\tilde{\epsilon}_{0j \rho \sigma}$ contains a repeated index and is thus $0$. Hence
\begin{align}
    H_{i \rho \sigma}{H_{j}}^{\rho \sigma} =0
\end{align}
for $i\neq j$. Finally we consider the case where $\mu=\nu=i$. It is easiest to do this for a particular value to see what is happening, so we set $i=1$. We have
\begin{align}\label{Hii}\begin{split}
    H_{1 \rho \sigma}{H_{1}}^{\rho \sigma}  &= \sum_{\rho, \sigma} H_{1 \rho \sigma} H_{1 \rho \sigma} g^{\rho \rho} g^{\sigma \sigma} = \sum_{\rho, \sigma} \beta^2 \tilde{\epsilon}_{01 \rho \sigma} \tilde{\epsilon}_{01 \rho \sigma} g^{\rho \rho} g^{\sigma \sigma} \\
    &=\beta^2 \tilde{\epsilon}_{01 23} \tilde{\epsilon}_{01 23} g^{22} g^{33} + \beta^2 \tilde{\epsilon}_{01 32} \tilde{\epsilon}_{01 32} g^{33} g^{22} \\
    &= \frac{2\beta^2}{a^4 e^{4\Phi}}\end{split}
\end{align}
are the only non-zero components. 

Next, let us look at $H^2$. The only non-zero terms are the ones with all distinct indices none of which is $0$. Let us look at
\begin{align}
    H_{123}H^{123}=H_{123} H_{\alpha \beta \gamma} g^{1\alpha}g^{2\beta}g^{3\gamma} =(H_{123})^2 g^{11}g^{22} g^{33} = \frac{\beta^2}{a^6 e^{6\Phi}}
\end{align}
since the metric is diagonal. Since there are $6$ ways of permuting the $3$ nonzero indices we get
\begin{align} \label{H}
    H^2 = \frac{6 \beta^2}{a^6 e^{6\Phi}}.
\end{align}

\subsection{Solving $\beta^g_{\mu \nu}=0$} Now that we have done the preparatory work in the previous two sections, this section is quite easy. Same as in the previous section, we will pick a point such that the space metric is $\overline{g}_{ij}=\delta_{ij}$ and do all of our calculations at that point. In this case, we only have to look at $\beta^g_{00}$ and $\beta^g_{ii}$ since the other components vanish. Up to leading order we have
\begin{align}
    \begin{split}
         \frac{\beta_{00}^g}{\ell_s^2} &= R_{00} + 2\nabla_0 \nabla_0 \Phi -\frac{1}{4}H_{0 \rho \sigma}{H_{0}}^{\rho \sigma} \\
         &=\left(-3\frac{\ddot{a}}{a}  -2 \ddot{\Phi} +(\square^E \Phi) \right) + 2 \left( \ddot{\Phi} -\dot{\Phi}^2  \right) - 0 \\
         &= -3\frac{\ddot{a}}{a}  +(\square^E \Phi) -2\dot{\Phi}^2
    \end{split}
\end{align}
and 
\begin{align}\begin{split}
     \frac{\beta_{ii}^g}{\ell_s^2} &= R_{ii} + 2\nabla_i \nabla_i \Phi -\frac{1}{4}H_{i \rho \sigma}{H_{i}}^{\rho \sigma} \\
     &=\left( a\ddot{a} + 2\dot{a}^2 +2K +2a\dot{a} \dot{\Phi} -a^2 (\square^E \Phi )  +2 \dot{\Phi}^2 a^2  \right) \\&\quad  + 2 \left( -\dot{\Phi}(a\dot{a} + a^2 \dot{\Phi}) \right) - \frac{1}{4}\left( \frac{2\beta^2}{a^4 e^{4\Phi}} \right) \\
     &= a\ddot{a} + 2\dot{a}^2 +2K +2a\dot{a} \dot{\Phi} -a^2 (\square^E \Phi )  +2 \dot{\Phi}^2 a^2   -2a\dot{a}\dot{\Phi} -2a^2 \dot{\Phi}^2 -  \frac{\beta^2}{2a^4 e^{4\Phi}} \\
     &= a\ddot{a} + 2\dot{a}^2 +2K -a^2 (\square^E \Phi )  -  \frac{\beta^2}{2a^4 e^{4\Phi}}\end{split}
\end{align}
by \eqref{StringMetricHessian}, \eqref{RicciStringMetricR00}, \eqref{RicciStringMetricRii}, \eqref{H00}, and \eqref{Hii}, so if we can solve the system of equations
\begin{align}
 \begin{split}
        -3\frac{\ddot{a}}{a}  +(\square^E \Phi) -2\dot{\Phi}^2 = 0 \\
        a\ddot{a} + 2\dot{a}^2 +2K -a^2 (\square^E \Phi )  -  \frac{\beta^2}{2a^4 e^{4\Phi}}=0
    \end{split}
\end{align}
then $\beta^g_{\mu \nu}$ vanish to leading order. Moreover, if we multiply the first equation by $a^2$ and add it to the second we get the equivalent system 
\begin{align}
    \label{MainSystemOfEquations} \begin{split}
        -3\frac{\ddot{a}}{a}  +(\square^E \Phi) -2\dot{\Phi}^2 = 0 \\
        -2a\ddot{a} + 2\dot{a}^2 +2K  -  \frac{\beta^2}{2a^4 e^{4\Phi}} - 2\dot{\Phi}^2 a^2=0
    \end{split}
\end{align}
and since this is a highly non-linear second order system of ODEs we will solve it numerically in Section \ref{SecAnalysis}. 

\subsection{Analyzing $\beta^\Phi$} For $D=4$ the formula \eqref{BetaFunctionPhi} gives
\begin{align}
    \beta^\Phi=-22 +\frac{3}{2}\ell_s^2 \left[ 4\nabla^\mu \Phi \nabla_\mu \Phi -4 \square \Phi -R + \frac{1}{12}H^2\right] 
    \end{align}
to leading order. Let us simplify the term in the brackets. We have
\begin{align}
    4\nabla^\mu \Phi \nabla_\mu \Phi = 4 g^{\alpha \mu} \nabla_\alpha \Phi \nabla_\mu \Phi = -4e^{-2\Phi} \dot{\Phi}^2
\end{align}
\begin{align}
    -4\square \Phi = -4e^{-2\Phi}\left( \square^E \Phi - 2\dot{\Phi}^2   \right)=e^{-2\Phi} \left( -4\square^E \Phi + 8 \dot{\Phi}^2 \right)
\end{align}
where we used \eqref{RelationBetweenDAlembertians},
\begin{align}
    -R=e^{-2\Phi} \left[ -6\left( \frac{\ddot{a}}{a} + \frac{\dot{a}^2}{a^2} + \frac{K}{a^2}  \right) + 6(\square^E \Phi) - 6 \dot{\Phi^2}  \right]
\end{align}
from \eqref{StringMetricScalarCurvature}, and
\begin{align}
    \frac{1}{12}H^2 =\frac{1}{12}\frac{6 \beta^2}{a^6 e^{6\Phi}} = e^{-2\Phi}\frac{\beta^2}{2a^6 e^{4\Phi}}
\end{align}
from \eqref{H}. Since every term has a factor of $e^{-2\Phi}$ we can factor that out to get
\begin{Small}
\begin{align} \begin{split}
    \beta^\Phi&=-22+\frac{3}{2}\ell_s^2 e^{-2\Phi} \left[  -4\dot{\Phi}^2 -4(\square^E \Phi)+8\dot{\Phi}^2 -6\left( \frac{\ddot{a}}{a} + \frac{\dot{a}^2}{a^2} + \frac{K}{a^2}  \right) + 6(\square^E \Phi) - 6 \dot{\Phi^2} + \frac{\beta^2}{2a^6 e^{4\Phi}}\right] \\
    &=-22 +\frac{3}{2}\ell_s^2 e^{-2\Phi}\left[ -2\dot{\Phi}^2 +2 (\square^E \Phi) -6\frac{\ddot{a}}{a} -6 \frac{\dot{a}^2}{a^2} -6 \frac{K}{a^2} + \frac{\beta^2}{2a^6 e^{4\Phi}}  \right]. \end{split}
\end{align}
\end{Small}
Now, the idea is that if \eqref{MainSystemOfEquations} holds, then we can actually eliminate the second derivative terms and this only depends on $a, \dot{a}, \Phi, \dot{\Phi}$, as we will shot momentarily. Now, we have the well known fact that $\beta^g_{\mu \nu}= \beta^B_{\mu \nu}=0$ implies $\beta^\Phi=constant$ (see for example the paragraph after equation (6.1.8) in \cite{Kiritsis:2019npv}), so we only need $\beta^\Phi=0$ at one point in time. However, the constancy of $\beta^\Phi$ should also directly follow from \eqref{MainSystemOfEquations}. We check that this is indeed true, as this can be used as a double check of our work. If it is true, it is very unlikely we had made a mistake somewhere earlier in our calculations.

\begin{proposition}
    If $a(t)$ and $\Phi(t)$ satisfy the system \eqref{MainSystemOfEquations}, then the quantity
    \begin{align}
        \beta^\Phi = -22 +\frac{3}{2}\ell_s^2 e^{-2\Phi}\left[ -2\dot{\Phi}^2 +2 (\square^E \Phi) -6\frac{\ddot{a}}{a} -6 \frac{\dot{a}^2}{a^2} -6 \frac{K}{a^2} + \frac{\beta^2}{2a^6 e^{4\Phi}}  \right]
    \end{align}
is constant. 
\end{proposition}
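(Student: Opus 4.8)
The plan is to prove the statement by showing directly that $\frac{d}{dt}\beta^\Phi = 0$. The expression for $\beta^\Phi$ still contains the second-order data $\square^E\Phi$ and $\ddot{a}/a$, so the first move is to lower its order using the first equation of \eqref{MainSystemOfEquations}. That equation reads $\square^E\Phi = 3\ddot{a}/a + 2\dot{\Phi}^2$; substituting it into the bracket cancels the two explicit $\ddot{a}/a$ contributions against each other and removes the $\square^E\Phi$ term (which, via \eqref{DAlembertianRWMetric}, is where the $\ddot{\Phi}$ was hiding). What remains is the genuinely first-order expression
\begin{align}
    \beta^\Phi = -22 + \frac{3}{2}\ell_s^2 e^{-2\Phi}\left[ 2\dot{\Phi}^2 - 6\frac{\dot{a}^2}{a^2} - 6\frac{K}{a^2} + \frac{\beta^2}{2 a^6 e^{4\Phi}} \right],
\end{align}
which depends only on $a, \dot{a}, \Phi, \dot{\Phi}$, exactly as anticipated in the discussion preceding the statement.

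Writing $B$ for the bracket, so that $\beta^\Phi = -22 + \tfrac{3}{2}\ell_s^2 e^{-2\Phi} B$, I would then compute
\begin{align}
    \frac{d}{dt}\beta^\Phi = \frac{3}{2}\ell_s^2 e^{-2\Phi}\left( \dot{B} - 2\dot{\Phi}\, B \right),
\end{align}
so the entire claim reduces to the single identity $\dot{B} = 2\dot{\Phi}\,B$. Differentiating $B$ term by term reintroduces the second derivatives $\ddot{\Phi}$ and $\ddot{a}$ (through $4\dot{\Phi}\ddot{\Phi}$ and the derivative of $\dot{a}^2/a^2$), and these are precisely what the system lets me eliminate. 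I would solve the second equation of \eqref{MainSystemOfEquations}, which is linear in $\ddot{a}$, to get $\ddot{a}/a = \dot{a}^2/a^2 + K/a^2 - \beta^2/(4a^6 e^{4\Phi}) - \dot{\Phi}^2$, and then feed this, together with $\square^E\Phi = -\ddot{\Phi} - 3(\dot{a}/a)\dot{\Phi}$ from \eqref{DAlembertianRWMetric}, back into the first equation to express $\ddot{\Phi}$ purely in first-order terms.

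With $\ddot{a}$ and $\ddot{\Phi}$ replaced by first-order data, the difference $\dot{B} - 2\dot{\Phi}\,B$ becomes a polynomial in the four quantities $\dot{\Phi}$, $\dot{a}/a$, $K/a^2$ and $\beta^2/(a^6 e^{4\Phi})$, and I would verify it vanishes identically by collecting like monomials. The main obstacle is purely the bookkeeping of this cancellation: there are on the order of eight distinct monomial types to track (e.g. $\dot{\Phi}^3$, $\dot{\Phi}\,\dot{a}^2/a^2$, $\dot{\Phi}\,K/a^2$, $\dot{a}^3/a^3$, $K\dot{a}/a^3$, $(\dot{a}/a)\dot{\Phi}^2$, and the two $\beta^2$ monomials), and one must confirm that each occurs with total coefficient zero. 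Some of these cancellations pair a term of $\dot{B}$ against a term of $-2\dot{\Phi}\,B$, while others cancel internally within $\dot{B}$ once the second-derivative substitutions are made; no conceptual input beyond careful algebra is required. As the surrounding text emphasizes, closing this cancellation is itself a strong internal consistency check on the curvature and $H$-field computations carried out earlier in Section \ref{SecVanishingBetaFunctions}, since a single sign or coefficient error would leave an uncancelled monomial behind.
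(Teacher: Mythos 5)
Your proposal is correct and follows essentially the same route as the paper's own proof: reduce the bracket to the first-order expression $B = 2\dot{\Phi}^2 - 6\dot{a}^2/a^2 - 6K/a^2 + \beta^2/(2a^6 e^{4\Phi})$ using the first equation of \eqref{MainSystemOfEquations}, then eliminate $\ddot{a}$ and $\ddot{\Phi}$ via the two equations and verify the resulting cancellation. Your packaging of the final step as the single identity $\dot{B} = 2\dot{\Phi}\,B$ is precisely what the paper's last two displayed computations establish (its computed $\dot{B}$ equals $-12\dot{\Phi}\dot{a}^2/a^2 - 12\dot{\Phi}K/a^2 + \beta^2\dot{\Phi}/(a^6 e^{4\Phi}) + 4\dot{\Phi}^3$, which is exactly $2\dot{\Phi}B$), and the monomial bookkeeping you defer does close as claimed.
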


\begin{proof}
    We need to check that if \eqref{MainSystemOfEquations} holds then 
    \begin{align}
        e^{-2\Phi}\left[ -2\dot{\Phi}^2 +2 (\square^E \Phi) -6\frac{\ddot{a}}{a} -6 \frac{\dot{a}^2}{a^2} -6 \frac{K}{a^2} + \frac{\beta^2}{2a^6 e^{4\Phi}}  \right]
    \end{align}
is constant. First notice that the first equation of \eqref{MainSystemOfEquations} can be written as 
\begin{align}
    -3\frac{\ddot{a}}{a}=2\dot{\Phi}^2 - \square^E \Phi
\end{align}
and also equivalently
\begin{align}
    \ddot{\Phi}=-3\frac{\ddot{a}}{a}-2\dot{\Phi}^2 - 3 \frac{\dot{a}}{a}\dot{\Phi}
\end{align}
so 
 \begin{align} \label{SimplifiedTermInBetaPhi} \begin{split}
        e^{-2\Phi}&\left[ -2\dot{\Phi}^2 +2 (\square^E \Phi) -6\frac{\ddot{a}}{a} -6 \frac{\dot{a}^2}{a^2} -6 \frac{K}{a^2} + \frac{\beta^2}{2a^6 e^{4\Phi}}  \right] \\
        &=e^{-2\Phi}\left[ -2\dot{\Phi}^2 +2 (\square^E \Phi)+ 4\dot{\Phi}^2 - 2(\square^E \Phi) -6 \frac{\dot{a}^2}{a^2} -6 \frac{K}{a^2} + \frac{\beta^2}{2a^6 e^{4\Phi}}  \right]\\
        &=e^{-2\Phi}\left[ 2\dot{\Phi}^2  -6 \frac{\dot{a}^2}{a^2} -6 \frac{K}{a^2} + \frac{\beta^2}{2a^6 e^{4\Phi}}  \right] \end{split}
    \end{align}
which notice does not depend on any second derivatives, as mentioned earlier. Let us first look at 
\begin{align}\begin{split}
    \frac{d}{dt}&\left[ 2\dot{\Phi}^2  -6 \frac{\dot{a}^2}{a^2} -6 \frac{K}{a^2} + \frac{\beta^2}{2a^6 e^{4\Phi}}  \right]\\ &=4\dot{\Phi} \ddot{\Phi} -12 \frac{\dot{a}\ddot{a}}{a^2} + 12 \frac{\dot{a}^3}{a^3} + 12K \frac{\dot{a}}{a^3} + \frac{\beta^2}{2}\left( -6a^{-7}\dot{a}e^{-4\Phi} - 4\dot{\Phi} e^{-4\Phi} a^{-6} \right) \\
    & = 4\dot{\Phi}\left( -3\frac{\ddot{a}}{a}-2\dot{\Phi}^2 - 3 \frac{\dot{a}}{a}\dot{\Phi}  \right) -12 \frac{\dot{a}\ddot{a}}{a^2} + 12 \frac{\dot{a}^3}{a^3} + 12K \frac{\dot{a}}{a^3} - \frac{3\beta^2 \dot{a}}{a^7 e^{4\Phi}} - \frac{2\beta^2 \dot{\Phi}}{a^6 e^{4\Phi}} \\
    & = -12 \dot{\Phi}\frac{\ddot{a}}{a} -8\dot{\Phi}^3 -12 \frac{\dot{a}}{a}\dot{\Phi}^2 - 12\frac{\dot{a}}{a} \frac{\ddot{a}}{a} + 12 \frac{\dot{a}^3}{a^3} + 12K \frac{\dot{a}}{a^3} - \frac{3\beta^2 \dot{a}}{a^7 e^{4\Phi}} - \frac{2\beta^2 \dot{\Phi}}{a^6 e^{4\Phi}} \end{split}
\end{align}
but we can rewrite the second equation of \eqref{MainSystemOfEquations} as 
\begin{align}
    \frac{\ddot{a}}{a}=\frac{\dot{a}^2}{a^2} + \frac{K}{a^2} - \frac{\beta^2}{4a^6 e^{4\Phi}} - \dot{\Phi}^2
\end{align}
which substituting in gives
\begin{align}
    \begin{split}
        \frac{d}{dt}&\left[ 2\dot{\Phi}^2  -6 \frac{\dot{a}^2}{a^2} -6 \frac{K}{a^2} + \frac{\beta^2}{2a^6 e^{4\Phi}}  \right]\\ 
        &=-12\dot{\Phi}\left( \frac{\dot{a}^2}{a^2} + \frac{K}{a^2} - \frac{\beta^2}{4a^6 e^{4\Phi}} - \dot{\Phi}^2  \right)-8\dot{\Phi}^3 - 12 \frac{\dot{a}}{a}\dot{\Phi}^2 \\
        &-12 \frac{\dot{a}}{a}\left(  \frac{\dot{a}^2}{a^2} + \frac{K}{a^2} - \frac{\beta^2}{4a^6 e^{4\Phi}} - \dot{\Phi}^2 \right) + 12 \frac{\dot{a}^3}{a^3} + 12K \frac{\dot{a}}{a^3} - \frac{3\beta^2 \dot{a}}{a^7 e^{4\Phi}} - \frac{2\beta^2 \dot{\Phi}}{a^6 e^{4\Phi}} \\
        &=-12\dot{\Phi}\frac{\dot{a}^2}{a^2} - 12\dot{\Phi}\frac{K}{a^2} + \frac{3\beta^2 \dot{\Phi}}{a^6 e^{4\Phi}} + 12\dot{\Phi}^3 -8\dot{\Phi}^3 - 12 \frac{\dot{a}}{a}\dot{\Phi}^2 \\
        &-12\frac{\dot{a}^3}{a^3} -12K\frac{\dot{a}}{a^3} + \frac{3\beta^2 \dot{a}}{a^7 e^{4\Phi}} + 12\frac{\dot{a}}{a}\dot{\Phi}^2  + 12 \frac{\dot{a}^3}{a^3} + 12K \frac{\dot{a}}{a^3} - \frac{3\beta^2 \dot{a}}{a^7 e^{4\Phi}} - \frac{2\beta^2 \dot{\Phi}}{a^6 e^{4\Phi}} \\
        &=-12\dot{\Phi}\frac{\dot{a}^2}{a^2} - 12\dot{\Phi}\frac{K}{a^2} + \frac{\beta^2 \dot{\Phi}}{a^6 e^{4\Phi}} + 4\dot{\Phi}^3
    \end{split}
\end{align}
but now we have
\begin{align}\begin{split}
    \frac{d}{dt}&\left(  e^{-2\Phi}\left[ 2\dot{\Phi}^2  -6 \frac{\dot{a}^2}{a^2} -6 \frac{K}{a^2} + \frac{\beta^2}{2a^6 e^{4\Phi}}  \right]   \right)\\ &= -2e^{-2\Phi}\dot{\Phi}\left[ 2\dot{\Phi}^2  -6 \frac{\dot{a}^2}{a^2} -6 \frac{K}{a^2} + \frac{\beta^2}{2a^6 e^{4\Phi}}  \right] + e^{-2\Phi} \left[ -12\dot{\Phi}\frac{\dot{a}^2}{a^2} - 12\dot{\Phi}\frac{K}{a^2} + \frac{\beta^2 \dot{\Phi}}{a^6 e^{4\Phi}} + 4\dot{\Phi}^3  \right] \\
    &=e^{-2\Phi} \left( -4\dot{\Phi}^3  +12 \frac{\dot{a}^2}{a^2} \dot{\Phi} +12 \frac{K}{a^2}\dot{\Phi} - \frac{\beta^2}{a^6 e^{4\Phi}}\dot{\Phi} -12\dot{\Phi}\frac{\dot{a}^2}{a^2} - 12\dot{\Phi}\frac{K}{a^2} + \frac{\beta^2 \dot{\Phi}}{a^6 e^{4\Phi}} + 4\dot{\Phi}^3  \right)\\
    &=0\end{split}
\end{align}
as expected. \end{proof}

Again, we didn't have to actually do this calculation, but the fact that the expected result comes out strongly suggests that all of our previous calculations are indeed correct. To summarize, if \eqref{MainSystemOfEquations} is satisfied then we can write
\begin{align}\label{FinalFormulaBetaPhi}
    \beta^\Phi = -22 +\frac{3}{2}\ell_s^2 e^{-2\Phi}\left[ 2\dot{\Phi}^2  -6 \frac{\dot{a}^2}{a^2} -6 \frac{K}{a^2} + \frac{\beta^2}{2a^6 e^{4\Phi}}  \right] 
\end{align}
which is determined purely by the initial conditions for our system of equations.

\section{Analysis of the Equations}\label{SecAnalysis}

Since our observed universe looks approximately static, we start by looking for static solutions to our equations. We have the following theorem:

\begin{theorem}
    If 
    \begin{align}
        \Phi(t)=\Phi_0, \quad a(t) =a_0, \quad K=-1, \quad \beta^2=-4a_0^4 e^{4\Phi_0}, \quad \ell_s^2 = \frac{22 }{6} a_0^2 e^{2\Phi_0}
    \end{align}
then all of the beta functions vanish, giving a $4$-dimensional background in which string theory is consistent. This gives a purely complex axion. We refer to the resulting spacetime as an anti-Einstein static universe.      
\end{theorem}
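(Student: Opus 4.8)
The plan is to verify the vanishing of all three beta functions by direct substitution of the static ansatz into the reduced equations already derived in Section~\ref{SecVanishingBetaFunctions}. Setting $\Phi(t)=\Phi_0$ and $a(t)=a_0$ constant annihilates every time derivative, so $\dot{\Phi}=\ddot{\Phi}=\dot{a}=\ddot{a}=0$, and consequently $\square^E\Phi=0$ by \eqref{DAlembertianRWMetric}. The condition $\beta^B_{\mu\nu}=0$ then needs no further work: it holds automatically because $H_{\mu\nu\rho}$ was built through the duality transformation \eqref{FormulaH} with $\dot{b}=\beta/(a^3 e^{4\Phi})$ solving the axion equation, so that branch of the analysis is already closed regardless of whether the profiles are static.

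For $\beta^g_{\mu\nu}=0$ I would feed the static values into the system \eqref{MainSystemOfEquations}. Its first equation reduces to $0=0$ identically and carries no information. The second equation collapses to $2K-\beta^2/(2a_0^4 e^{4\Phi_0})=0$, which fixes $\beta^2=4K a_0^4 e^{4\Phi_0}$. With $K=-1$ this is exactly $\beta^2=-4a_0^4 e^{4\Phi_0}$; since the right-hand side is negative, $\beta$ must be purely imaginary, which is precisely the complex-axion feature stressed in the introduction.

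The remaining condition is $\beta^\Phi=0$, and here I would use the already-simplified, derivative-free expression \eqref{FinalFormulaBetaPhi} rather than the original formula, which is legitimate since \eqref{MainSystemOfEquations} is being assumed. Substituting the static values together with $\beta^2=4K a_0^4 e^{4\Phi_0}$ into the bracket gives $-6K/a_0^2+2K/a_0^2=-4K/a_0^2$, so that $\beta^\Phi=-22-6K\ell_s^2 e^{-2\Phi_0}/a_0^2$. Demanding this vanish yields $\ell_s^2=-22\,a_0^2 e^{2\Phi_0}/(6K)$, which for $K=-1$ is the stated value $\ell_s^2=\frac{22}{6}a_0^2 e^{2\Phi_0}$.

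There is no genuine analytic obstacle here — the proof is a substitution check — but the point worth flagging is that the \emph{sign conditions} do the real work and single out $K=-1$. Positivity of $\ell_s^2$ forces $K<0$ through the relation $\ell_s^2=-22\,a_0^2 e^{2\Phi_0}/(6K)$: the spatially flat case $K=0$ leaves $\beta^\Phi=-22\neq 0$ with nothing available to cancel it, while the spherical case $K=+1$ would demand an unphysical negative $\ell_s^2$. Thus the hyperbolic slicing is not an arbitrary input but the only choice compatible with a real string length, and the same step simultaneously enforces $\beta^2<0$. These two facts together are what produce a negatively curved, purely-complex-axion background, and they also make transparent the uniqueness (up to rescaling of $a_0$ and $\Phi_0$) claimed in the abstract.
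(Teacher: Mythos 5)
Your proposal is correct and follows essentially the same route as the paper's own proof: substitute the static ansatz into \eqref{MainSystemOfEquations} to get $0=0$ and the relation $\beta^2 = 4K a_0^4 e^{4\Phi_0}$ (the paper's \eqref{CriticalEquation}), then use the derivative-free expression \eqref{FinalFormulaBetaPhi} to conclude $K=-1$ and $\ell_s^2 = \frac{22}{6}a_0^2 e^{2\Phi_0}$, with $\beta^B_{\mu\nu}=0$ holding by the duality construction \eqref{FormulaH} and \eqref{Formulabdot}. Your explicit remark that $K=0$ leaves $\beta^\Phi=-22$ uncancelled and $K=+1$ would force $\ell_s^2<0$ is only implicit in the paper's ``to have $\beta^\Phi=0$ we need $K=-1$,'' and is a worthwhile clarification, but it does not change the argument.
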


\begin{proof}
    If $\Phi(t)=\Phi_0$ and $a(t)=a_0$ then the first equation in \eqref{MainSystemOfEquations} reduces to $0=0$ and holds trivially. The second equation reduces to 
    \begin{align} \label{CriticalEquation}
        K= \frac{\beta^2}{4a_0^4 e^{4\Phi_0}}
    \end{align}
    or \begin{align}
        \beta^2 =4a_0^4 e^{4\Phi_0} K.
    \end{align}
We therefore have
\begin{align}\begin{split}
     \beta^\Phi &= -22 +\frac{3}{2}\ell_s^2 e^{-2\Phi_0}\left[  -6 \frac{K}{a_0^2} + \frac{\beta^2}{2a_0^6 e^{4\Phi_0}}  \right] \\
     &=-22 +\frac{3}{2}\frac{\ell_s^2 e^{-2\Phi_0}}{a_0^2}\left[  -6K + \frac{\beta^2}{2a_0^4 e^{4\Phi_0}}  \right] \\
     &=-22 +\frac{3}{2}\frac{\ell_s^2 e^{-2\Phi_0}}{a_0^2}\left(  -6K  +2K  \right) \\
     &=-22 +\frac{3}{2}\frac{\ell_s^2 e^{-2\Phi_0}}{a_0^2}\left(  -4K  \right)\end{split}
\end{align}
so to have $\beta^{\Phi}=0$ we need $K=-1$ after which some simple algebra gives
\begin{align}
    \ell_s^2 = \frac{22 }{6} a_0^2 e^{2\Phi_0}.
\end{align}
    Finally, letting $\dot{b}$ and $H_{\mu \nu \rho}$ be defined by \eqref{Formulabdot} and \eqref{FormulaH} ensures $\beta^B_{\mu \nu}=0$.
\end{proof}

Notice, this is also very interesting because using equations \eqref{RWDDotEquation} and \eqref{RWDotEquation} the anti-Einstein static universe can be interpreted as a universe filled with a perfect fluid having a negative energy density and positive pressure. Interestingly, in \cite{farnes2018unifying} a similar "dark fluid" was used as an adhoc hypothesis to explain dark matter. Here, such a dark fluid pops out of the equations on its own. Since our universe is close to static, computing galaxy rotation curves against the anti-Einstein static background to see if this reproduces observations is a natural future step which we hope to do sometime in the future.

We find that the equations are very sensitive to the initial conditions and tend to blow up either in the future or in the past if one picks initial conditions too far away from the static case. Since our universe appears quite close to being static, we will choose some initial conditions which are close to the initial conditions for the static case. However, the question is, what are the correct initial conditions? At least two of these are easy to set, but when it comes to the conditions for the dilaton field and the axion constant this is less clear. For now, we just play around with the initial conditions to show that the solutions have rich behavior in both the past and future. The hope is that we can find initial conditions which will match current models for the expansion parameter $a(t)$ for the universe, but this will require some kind of algorithmic search through the space of initial conditions restricted by the condition $\beta^\Phi=0$.

Recall, one defines the Hubble parameter by
\begin{align}
    h(t)=\frac{\dot{a}(t)}{a(t)}.
\end{align}
where we use a lowercase $h$ so as not to be confused with $H$ appearing in the beta functions. Notice an increasing $h$ implies a necessarily accelerating universe since
\begin{align}
    \dot{h} = \frac{\ddot{a}}{a} - \frac{\dot{a}^2}{a^2}
\end{align}
so that the first term has to be positive. We can always normalize and set $a(0)=1$. We will use the currently measured value of the Hubble parameter
\begin{align}
    h(0)=\frac{\dot{a}(0)}{a(0)}=\dot{a}(0) \approx 2 \times 10^{-18} s^{-1}.
\end{align}
for our $\dot{a}(0)$. Since we are looking at the evolution of the universe, It turns out to be easiest to measure time in billions of years. Since $1 \text{ billion years} \approx 3 \times 10^{16} s$ in these units we get
\begin{align}
    \dot{a}(0)\approx 6 \times  10^{-2} \text{(billion years)}^{-1}.
\end{align}
So in all of our graphs, time is measured in billions of years. We will let $t=0$ be current time.

Next, we need to rewrite the system \eqref{MainSystemOfEquations} in a manner more suitable for numerical integration. The first equation can be rewritten as
\begin{align}
    \ddot{\Phi}=-2\dot{\Phi}^2 - 3 \frac{\dot{a}}{a}\dot{\Phi}    -3\frac{\ddot{a}}{a}
\end{align}
and the second as
\begin{align}
     {\ddot{a}}=\frac{\dot{a}^2}{a} + \frac{K}{a} - \frac{\beta^2}{4a^5 e^{4\Phi}} - a\dot{\Phi}^2
\end{align}
and substituting this into the first equation gives the system
\begin{align} \label{NumericalSystemOfEquations} \begin{split}
    \ddot{\Phi}&=-2\dot{\Phi}^2 - 3 \frac{\dot{a}}{a}\dot{\Phi}    -3\left( \frac{\dot{a}^2}{a^2} + \frac{K}{a^2} - \frac{\beta^2}{4a^6 e^{4\Phi}} - \dot{\Phi}^2  \right) \\
    {\ddot{a}}&=\frac{\dot{a}^2}{a} + \frac{K}{a} - \frac{\beta^2}{4a^5 e^{4\Phi}} - a\dot{\Phi}^2\end{split}
\end{align}
which is easy to solve numerically by converting it to a first order system. We now give some examples of numerical solutions with initial conditions close to those of the anti-Einstein static universe. If we set $\Phi_0=0, a_0=1, K=-1$ then $\beta^2=-4$ so we will look at initial conditions close to the these values. Again, the idea here is to show the equations have enough interesting behavior to potentially match the curves for the Hubble parameter that is expected by the cosmologists. 

\subsection{Example 1} In this example we take the parameters to be
\begin{align}
    K=-1, \quad  \beta^2=-4.1, \quad a(0)=1, \quad  \dot{a}(0)=0.06, \quad \Phi(0)=-0.02, \quad \dot{\Phi}(0)=0
\end{align}
which upon solving \eqref{NumericalSystemOfEquations} gives the two graphs

\begin{figure}[H]
    \centering
    \subfloat[\centering Solution to the Past]{{\includegraphics[width=3in]{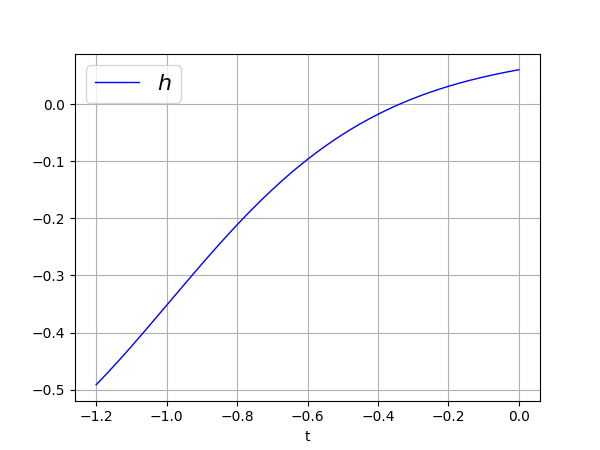} }}%
    \qquad
    \subfloat[\centering Solution to the Future]{{\includegraphics[width=3in]{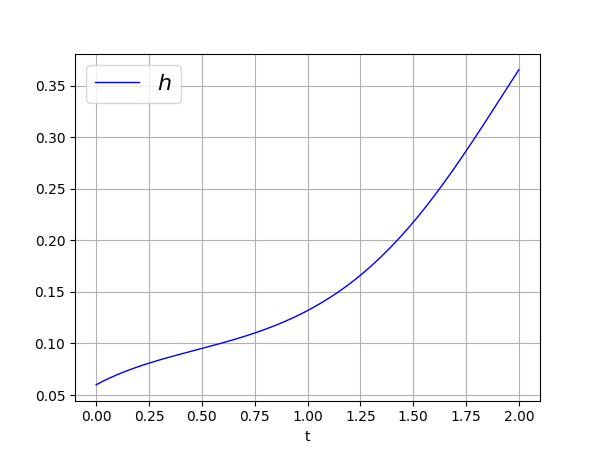} }}
\end{figure}
\noindent which we think is by far the most interesting example of the ones we will show. Notice in this case, the Hubble parameter is always increasing so $\ddot{a}>0$. Moreover, notice that $h<0$ in the past indicating a contraction, and there is a point where the sign changes so the contraction turns into expansion. Such a point could be thought of as a type of big bang. Certainly, going from contraction to expansion, is an interesting phenomenon worthy of some investigation. Also notice that the equations allow us to go before this "big bang". These conditions also give that $h$ will continue to increase into the future. Notice the time scale is not quite correct, since we are measuring $t$ in billions of years, this would say the big bang would have been about $400$ million years ago. However, the idea is that by playing around with the initial conditions we could move this point further backwards in time. For now, we are just interested in some general features of the solutions, and finding the actual correct initial conditions based on cosmological data is a project in and of itself. 

Moreover, the James Webb space telescope has discovered galaxies that seem to have formed way too soon after the expected date of the big bang. The period of contraction before our "big bang" (which we put in quotes as we are unsure if this point has the properties that cosmologists would find desirable in their model to fit being called a big bang) might give a possible explanation. 

\subsection{Example 2} For this one we use the initial conditions
\begin{align}
     K=-1, \quad  \beta^2=-4, \quad a(0)=1, \quad  \dot{a}(0)=0.06, \quad \Phi(0)=0, \quad \dot{\Phi}(0)=0
\end{align}
and we obtain
\begin{figure}[H]
    \centering
    \subfloat[\centering Solution to the Past]{{\includegraphics[width=3in]{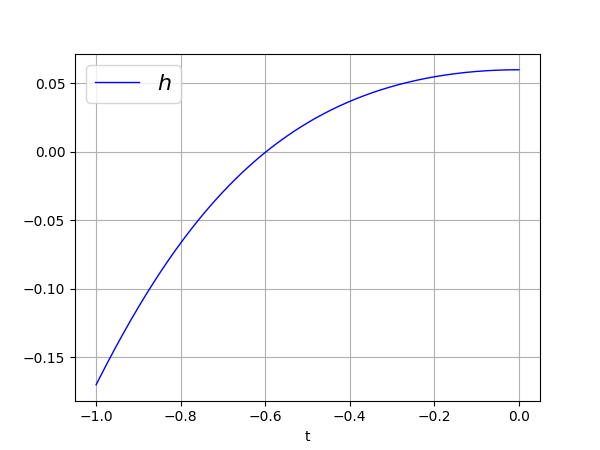} }}%
    \qquad
    \subfloat[\centering Solution to the Future]{{\includegraphics[width=3in]{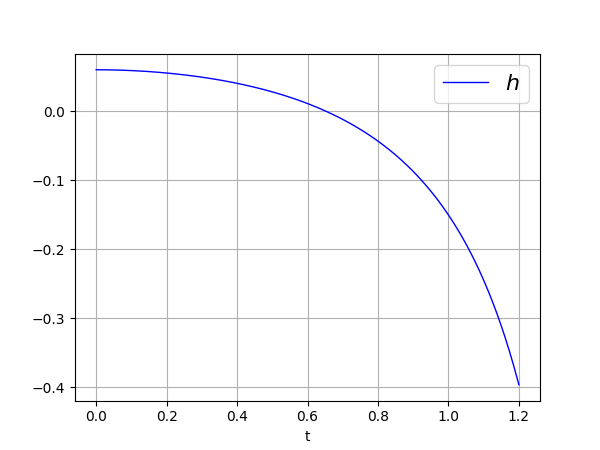} }}
\end{figure}
and notice that again there is an instant of time in the past where contraction turns into expansion, but not in the future the universe begins contracting again. So this is quite interesting because if the Hubble parameter is indeed approximately constant, by continuity there should be some initial conditions inbetween those of our two examples where the Hubble parameter is approximately constant for arbitrarily long periods of time. Again, what those conditions are requires a numerical search throguh the space of admissible initial conditions, constrained by $\beta^\Phi=0$. 

\subsection{Example 3} For this example use use the initial conditions
\begin{align}
     K=-1, \quad  \beta^2=-4, \quad a(0)=1, \quad  \dot{a}(0)=0.06, \quad \Phi(0)=0.1, \quad \dot{\Phi}(0)=0.06
\end{align}
which gives
\begin{figure}[H]
    \centering
    \subfloat[\centering Solution to the Past]{{\includegraphics[width=3in]{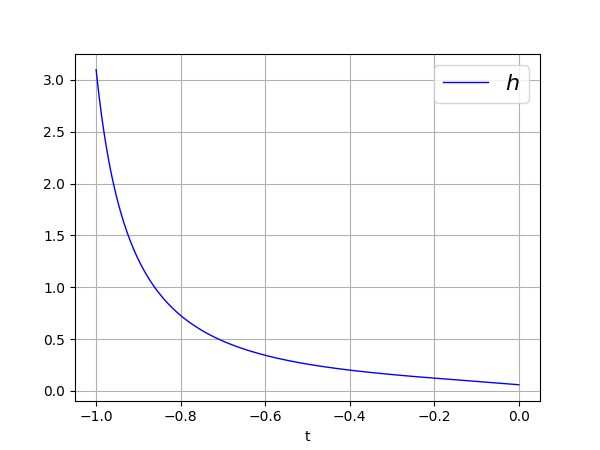} }}%
    \qquad
    \subfloat[\centering Solution to the Future]{{\includegraphics[width=3in]{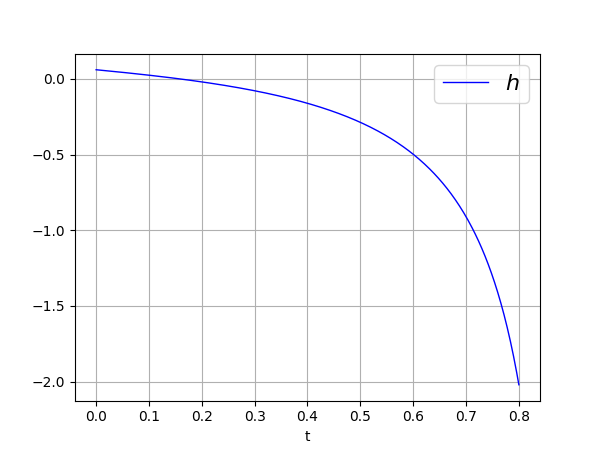} }}
\end{figure}
which notice has completely different behavior in the past than the other two examples, giving a Hubble parameter which blows up as we go into the past. This could more accurately be described as a period of cosmic inflation.

Overall, we hope that these examples are enough to suggest that the equations have enough interesting behavior to be able to fit the data of cosmologists. Whether that is true must be the subject of further investigation. Moreover, the equations currently do not contain any terms describing ordinary matter, so perhaps by including such terms by including massive fields in the Polyakov action the fit could be made even better.

\section{Conclusion and Future Research}

We started with the hypothesis that $D=4$ critical string theory is possible in a Robertson-Walker spacetime. This naturally led us to the idea of a purely complex axion field. A good way to judge if a hypothesis is good is how many problems it seems to solve. We introduced the above hypothesis to get rid of the problem of extra dimensions, but we seem to solve several additional problems for free. To recap:
\begin{enumerate}
    \item There is no need for extra dimensions, compactifications, branes, or Calabi-Yau manifolds. There is basically a unique static solution that gives a consistent $4$-dimensional string theory on cosmological scales which we call the anti-Einstein static universe.
    \item The anti-Einstein static universe has negative curvature. This implies it can be interpreted as being filled with a fluid of negative energy and positive pressure. In \cite{farnes2018unifying} a negative energy fluid was used as an adhoc assumption for resolving the dark matter problem. Assuming that analysis is correct, $D=4$ forces such a fluid upon us automatically, thus gives a way to resolve the dark matter problem.
    \item Small perturbations away from the static case give very interesting behaviors for the universe in both the past and the future. Some of these can be identified with inflation, some with something akin to a big bang. It should be possible to chose initial conditions to agree with cosmological observations and make predictions based on that which should be checkable. 
    \item One major problem in cosmology is the problem of singularities. The Hawking-Penrose singluarity theorems guarantee the existnce of singularities, but require certain energy conditions to hold \cite{Hawking:1970zqf}. These energy conditions do not hold in our static case, so the existence of singularities is no longer guaranteed. 
\end{enumerate}
so for the price of one hypothesis, it would seem we have solutions to four major problems. That is not bad. Next we discuss some other problems that could be solved.

\subsection{Connection to the Standard Model} Naturally, the goal is to eventually connect the theory to particle behavior. Since we have a unique static spacetime, we should be able to do any calculations against this background and somehow we need to connect this to the standard model. Ignoring the Higgs boson for the moment (which might be somehow connected to the dilaton) it would be foolish not to notice that there are $4$ bosons in the standard model, and now we have $4$ dimensions. Bosons are supposed to be interpreted as excitations of the coordinate fields on the string worldsheet. There are many ways of extracting exactly $4$ bosons from the four coordinates. One is to assume every string has to be excited in the time direction. This then leaves lowest order excitations in either $0, 1, 2,$ or $3$ of the spatial coordinates. Another possibility is to Wick rotate on the tangent space to get $4$-dimensional Euclidean space where all of the coordinates are equivalent, and then one can have lowest order excitations in $1, 2, 3$ or $4$ coordinates. Naturally, the next step is to include supersymmetry. Unbroken supersymmetry predicts that the bosons and corresponding fermions should have the same masses. Clearly, the supersymmetry has to be broken somehow, perhaps through an intercation with either the dilaton or axion. However, the fact we can start making connections to the four observed bosons is a step in the right direction.

\subsection{Connection to other theories of quantum gravity} If we draw an analogy to the standard model, there particles are quanta of fields. In string theory, particle properties are a result of quantizing fields on the worldsheet. But this then begs the question, what is the "string" field of which the strings are quanta? Geometry can be described in terms of holonomies which are loops in space. Perhaps we can consider some holonomy field, whatever that would mean, and strings could be quanta of such a field? This idea starts to look suspiciously like loop quantum gravity, and it would be extremely interesting if these competing theories turned out to be complementary in $D=4$. 

Also, in $D=4$ there are phenomena that do not occur in higher dimensions. In three spatial dimensions one can make non-trivial knots which is impossible in higher dimensions. Perhaps this can somehow cause supersymmetry breaking? Also, letting one's imagination run wild, one can imagine the background fields somehow being made of strings meshing together to form a coherent state, which is impossible in higher dimensions. 

\subsection{Future Research} We now comment on the next natural avenues of research. 
\begin{enumerate}
    \item Can initial conditions for \eqref{NumericalSystemOfEquations} be chosen such that they reproduce the Hubble parameter graphs believed to be true by the cosmologists?
    \item Can computation of galaxy rotation curves against the anti-Einstein static universe background reproduce the flattening of the rotation curves?
    \item We know what happens when space is homogenous and isotropic. What about the case where it is spherically symmetric? Will singularities, for example at the center of black holes still form? Or will the consistency of string theory prevent their formation?
\end{enumerate}
These seem to be the most natural questions to start with. Of course the other problems of connecting the theory to the standard model is also very interesting, but that might be something that is a bit harder to do. We hope to investigate all of these problems in the near future.

\bibliography{ref.bib}

 \footnotesize

  J.S.~Jaracz, \textsc{Department of Mathematics, Texas State University,
    San Marcos, TX 78666}\par\nopagebreak
  \textit{E-mail address} \texttt{jaracz@txstate.edu}

\end{document}